\newtheorem{prop}{Proposition}
\theoremstyle{definition}
\begin{document}

\begin{frontmatter}

\title{A Graphical Bayesian Game for Secure Sensor Activation in Internet of Battlefield Things   \vspace{-1ex}}
\tnotetext[mytitlenote]{This research was sponsored by the Army Research Laboratory under Grant Number W911NF-17-1-0021. }

\author{Nof Abuzainab and Walid Saad}
%
%

\address{Wireless@VT, Department of Electrical and Computer Engineering, Virginia Tech, Blacksburg, VA, USA\\Emails:\{nof, walids\}@vt.edu}

\begin{abstract}
In this paper, the problem of secure sensor activation  is studied for an Internet of Battlefield Things (IoBT) system in which an attacker compromises a set of the IoBT sensors for the purpose of eavesdropping and acquiring information about the battlefield. In the considered model, each IoBT sensor seeks to decide whether to transmit or not based on its utility. The utility of each sensor is expressed in terms of the redundancy of the data transmitted, the secrecy capacity and the energy consumed.  Due to the limited communication range of the IoBT sensors and their uncertainty about the location of the eavesdroppers, the problem is formulated as a graphical Bayesian game in which the IoBT sensors are the players.  Then, the utilities of the IoBT sensors are modified to take into account the effect of activating each sensor on the utilities of its neighbors, in order to achieve better system performance. The modified game is shown to be a Bayesian potential game, and a best response algorithm that is guaranteed to find a Nash equilibrium of the game is proposed. Simulation results show the tradeoff between the information transmitted by the IoBT sensors and the desired secrecy level. Simulation results also demonstrate the effectiveness of the proposed approach in reducing the energy consumed compared to the baseline in which all the IoBT sensors are activated. The reduction in energy consumption reaches up to $98\%$ compared to the baseline, when the number of sensors is $5000$.
\end{abstract}


\end{frontmatter}

\nolinenumbers

\section{Introduction}
Internet of Things (IoT) devices such as sensors, autonomous vehicles, and drones are projected to be integrated in military networks, forming the Internet of Battlefield Things (IoBT) \citep{ IoBTone, drone}.
In a military setting, the information collected from all IoBT devices can provide accurate and timely information about a certain battlefield environment, thus improving the efficiency of the military operations. However, given the massive number of IoBT devices, activating all the IoBT sensors simultaneously at a given time will incur significant cost, especially in terms of energy consumption.  Moreover, IoBT sensor measurements are often correlated based on their distance \cite{kraus2}, and, hence, tremendous amount of redundant information will be transmitted by the densely deployed IoBT sensors. Thus, there is a need to design efficient schemes for IoBT sensors activation, which determine the optimal set of activated sensors that minimizes both the energy consumed and the redundancy in the information transmitted.

Moreover, many IoBT devices have limited computational capabilities and therefore can not implement strong security measures. Thus, they are prone to multitude of security attacks and can be easily compromised by attackers, especially in a battlefield setting \cite{security}. One common attack scenario is one in which an adversary can compromise a set of the IoBT sensors in order to eavesdrop the information circulating within the IoBT\cite{IoBTone}. Therefore, each IoBT sensor must deliver its information securely from the eavesdroppers. In order to achieve security, physical layer security is favorable to be employed as it allows each sensor to deliver its information perfectly secure from eavesdroppers even in the presence of eavesdroppers with unlimited computational capabilities. Thus, each sensor will deliver its information with a rate which is equal to the secrecy capacity i.e. the maximum capacity such that the information is perfectly secured from the eavesdroppers. However, the energy overhead of activating a node with low secrecy capacity will be significant. Thus, the IoBT should take into account the secrecy capacity of the nodes in order to find the optimal activation policy. 

Several node activation schemes have been proposed for wireless sensor networks, the majority of which focus on energy efficiency \citep{actrout,actrout1,actrout3,actrout4,actener,actentr,actcom,actfish,actcov,actcov1}. In \cite{actrout,actrout1,actrout3,actrout4}, different routing protocols are proposed that control the sensors' sleep duration for energy efficient data collection in wireless sensor networks. In \cite{actener} and \cite{actentr}, distributed protocols for energy conservation in wireless sensor networks, using sensor activation schemes, are proposed. In \cite{actener},  a wireless sensor network with event driven traffic is considered, and a distributed protocol is proposed in which each node chooses to switch to the active mode only if the received signal power is above a certain threshold, thus significantly reducing the energy consumed in the network when no data is being transmitted. In \cite{actentr}, a distributed protocol to reduce energy consumption in a wireless sensor network is proposed which sets nodes transmitting redundant information to sleep mode, using information entropy and correlation graphs.
Centralized approaches for optimal node activation are proposed in \cite{actcom} and \cite{actfish}. The authors in \cite{actfish}  analyze the optimal set of activated sensors that  minimizes the Fischer information matrix of an unknown estimated parameter. In \cite{actcom}, the optimal set of activated nodes that minimizes energy and data redundancy is determined, using a compressive sensing scheme. Node activation schemes that maximize the area coverage while reducing data redundancy are developed in \cite{actcov} and \cite{actcov1}. In  \cite{actcov},  the proposed protocol reduces redundancy while maintaining coverage by allowing each sensor to join the network, if no other sensor is activated within the same communication range. 
In \cite{actcov1}, a predictive scheme is proposed that allows each sensor to decide whether or not to check redundancy at each time slot, thus reducing unnecessary redundancy checks and consequently the computational energy consumed.

On the other hand, there have been few schemes for secure sensor activation. In \cite{actmal}, a differential game is proposed for a wireless sensor network in which an attacker chooses the malware injection rate while the network
operator controls the sleep rate of the sensor nodes in addition to the recovery rate in order to
limit the spread of the malware. In \cite{actsec1}, an evolutionary game is formulated for a wireless sensor network in which each sensor chooses the transmission power that maximizes the secrecy rate while minimizing the energy consumed.
More recently, in \cite{actsec}, the authors investigated the security of a large scale wireless network containing both legitimate transmitters and eavesdroppers. The legitimate transmitters seek to find the optimal probability of node activation that maximizes their secrecy energy efficiency whereas the eavesdroppers find the probability of node activation that minimizes their energy efficiency. The problem is formulated as a noncooperative game between the set of transmitters and the set of eavesdroppers, and it is shown the magnitude of degradation of the secrecy energy efficiency of the legitimate transmitters due to the increase in the density of eavesdroppers.

However, most of the existing approaches of sensor activation are either centralized such as in \cite{actcom} and \cite{actfish} or not fully distributed such as in  \cite{actsec}. Centralized approaches are not favorable for IoBT as they incur significant communication and computation overhead. 
Further, existing approaches either maximize the secrecy capacity such as in \cite{actsec1} and  \cite{actsec}  or minimize data redundancy such as in \cite{actentr,actcom, actfish}. Thus, there is no existing scheme that minimizes data redundancy while maximizing the secrecy capacity. Such a scheme is crucial for the successful operation of a mission critical system existing in adversarial environment, such as the IoBT, in which it is necessary to deliver the data as securely as possible. Besides, eliminating redundant transmitted data is essential for the IoBT in two main aspects. First, switching sensors sending redundant information to sleep mode helps in extending the lifetime of the IoBT devices. Second, removing redundant data significantly decreases congestion and therefore ensures the timely delivery of the information, which is a key requirement for IoBT.

The main contribution of this paper is a distributed scheme for sensor activation that is suitable for IoBT environment. The proposed scheme reduces the redundancy in the transmitted data while increasing the secrecy of the delivered information, using a graphical Bayesian game approach.
In particular, the key contributions are summarized as follows:
\begin{itemize}
\item We propose a distributed approach for IoBT sensor activation that aims at achieving energy efficiency, through deactivating sensors transmitting redundant information, while maintaining desirable secrecy. 
In terms of security, we consider a realistic attack in which an attacker compromises a set of the IoBT sensors for eavesdropping. Thus, in the proposed approach, each sensor node decides whether to transmit or not according to a utility function that captures the conditional entropy of the sensor's data given the measurements of its neighbors, its secrecy capacity, and the energy consumed. 
\vspace{-0.2 cm}
\item We formulate the node activation problem as a graphical Bayesian game whose players are the IoBT sensors. The formulated game captures the limited communication range and resources of each IoBT sensor. In particular, the utility of each IoBT sensor depends on the actions of neighbors that are in its communication range.
The proposed Bayesian game also accounts for the uncertainty of each IoBT sensor about the location of the eavesdroppers by maintaining a belief about the probability of a neighbor being compromised. Therefore, the proposed approach does not require exact knowledge of the location of the eavesdroppers. 
\vspace{-0.2 cm}
\item In order to guarantee the existence of a pure strategy Nash equilibrium as well as to achieve better system performance,  we consider the alternate game where the repercussion utility for each IoBT sensor is used. The repercussion utililty allows each IoBT sensor to take into account the decrease in the utility in its neighbors upon its activation. We show that the modified game is a Bayesian potential game.
\vspace{-0.2 cm}
\item In the modified game, the utility of each sensor depends on the actions of the neighbor of its neighbors. Thus, in order to reach the Nash equilibrium, we propose a distributed learning scheme, using best response dynamics, suitable for the proposed game, in which each sensor broadcasts advertised action, from a sensor in its communication range, to its neighbors.
\vspace{-0.2 cm}
\item Simulation results show the tradeoff between the information transmitted by the IoBT sensors and the desired secrecy level. Simulation results also show the effectiveness of the proposed approach in reducing the energy consumed compared to the baseline in which all the IoBT sensors are activated. The reduction in energy consumption reaches up to $98\%$ compared to the baseline.
\end{itemize}

\vspace{-0.3cm}
The paper is organized as follows. Section II describes the IoBT system model and the considered problem of secure sensor activation.
Section III presents the proposed IoBT graphical Bayesian game and its solution approach. Simulation results are shown in Section IV. Finally, conclusions are drawn in Section V.
\section{System Model}
Consider an IoBT network that includes of a set $\mathcal{M}$ of $M$ sensors deployed to collect information about a certain phenomenon and subsequently transmit the collected information to their local sinks using an uplink wireless communication link. The IoBT network spans an $N\times N$ rectangular grid. Thus, the location of any node $i$ is represented by its coordinates 
$\textbf{l}_i=(x_i,y_i)$.
 Each sensor transmits information to its local sink using a transmit power value $P$ over orthogonal channels. Further, the simplified path loss model is used to model the attenuation of the signal transmitted by each sensor with distance. AWGN with variance $\gamma^2$ is present at each receiver. Thus, the received power by the local sink from sensor $i$'s transmission is given by
\begin{equation}
P_i=\frac{A|d_{ai}|^{-\alpha}P}{\gamma^2},
\end{equation}
where $A$ is the path loss coefficient, $\alpha$ is the path loss exponent, $d_{ai}$ is the distance between the local sink and sensor $i$ and is given by $d_{ai}=(x_{s_i}-x_a)^2+(y_{s_i}-y_a)^2$, where $\textbf{l}_a=(x_a,y_a)$ is the location of the local sink and  $\textbf{l}_{s,i}=(x_{s_i},y_{s_i})$ is the location of sensor $i$. 
\begin{figure}[t]{
	\centering
	\includegraphics[width=8 cm,height=4 cm,angle=0]{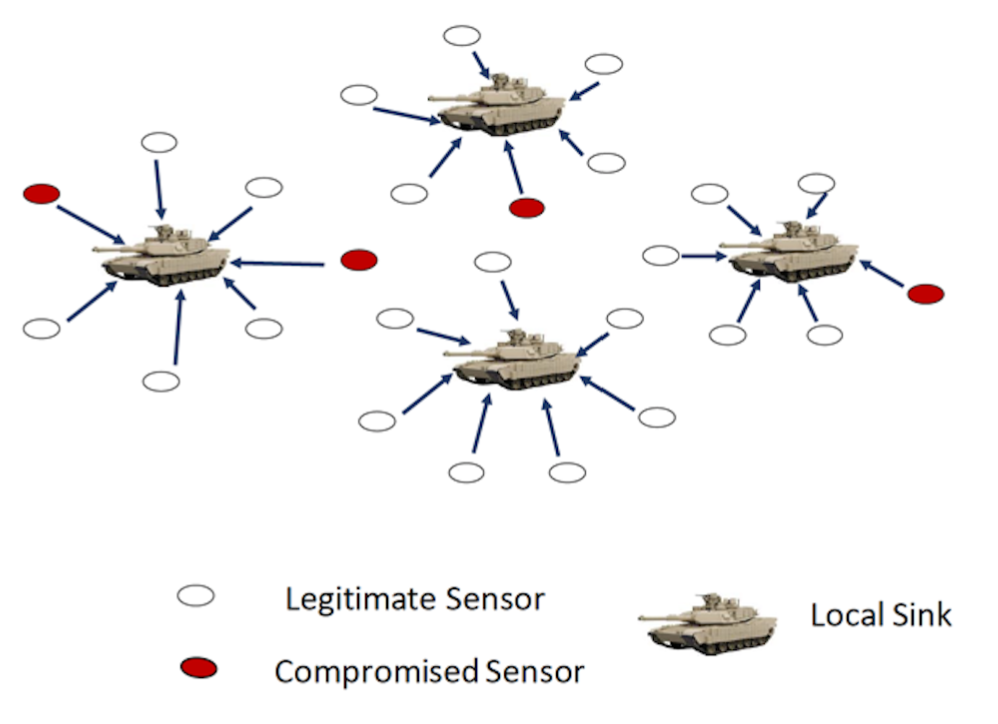}
	\caption{System Model}\label{sysmod}
	}\vspace{-0.1 cm}
\end{figure}
In a practical IoBT, the sensors' readings are correlated based on their locations \cite{kraus2}. In particular, the closer the sensors are located, the more likely they are collecting similar information.
 Thus, the information collected by the sensors is assumed to be distributed according to a multivariate Gaussian distribution which is often used to model spatial correlations between observations \cite{kraus}. Let $\textbf{Y}=(Y_i)_{i \in \mathcal{M}}$ be the vector of  sensors' observations, where $Y_i$ is a scalar whose value corresponds to sensor's $i$ measurement. The joint distribution of $\textbf{Y}$ is given by
\begin{equation}
\mathbb{P}(\boldsymbol{Y}=\boldsymbol{y})=\frac{1}{(2\pi)^{\frac{M}{2}}}e^{-\frac{1}{2}(\boldsymbol{y}-\boldsymbol{\mu})^T\Sigma^{-1}(\boldsymbol{y}-\boldsymbol{\mu})},
\end{equation}

where $\boldsymbol{\mu} $ is the mean and $\boldsymbol{\Sigma}$ is the covariance matrix  given by
 $\boldsymbol{\Sigma}=(\sigma_{ij})_{i,j \in \mathcal{M}}$
where each $\sigma_{ij}$ is a function of of the distance $d_{ij}$ between sensors $i$ and $j$. Typically, $\sigma_{ij}$ is chosen to decay exponentially with the distance using the radial basis function  \cite{kraus2} and is thus given by
\begin{equation}
\sigma_{ij}=\beta e^{-\frac{||(x_{s_i},y_{s_i})-(x_{s_j},y_{s_j})||^2}{2\lambda^2}},
\end{equation}
where the parameters $\beta$ and $\lambda$ are chosen to capture the correlation among the sensor measurements. The values of $\beta$ and $\lambda$ are usually estimated so that the Gaussian distribution matches the observed sensors' measurements.

Thus, in order to reduce redundancy in the transmitted information and to save energy, each IoBT sensor decides whether to activate and transmit information or switch to a sleep mode. Hence, the decision of IoBT sensor $i$ is made based on the conditional entropy $D(Y_i|\mathcal{W}_i)$ of its data $Y_i$ given the data generated by its neighbors $\mathcal{N}_i$ i.e., the IoBT sensors that are in its communication range, where $\mathcal{N}_i$ is the set of neighbors of sensor $i$ given by $\mathcal{N}_i=\{j \in \mathcal{M} | d_{ij}<r\}$, $r$ is the communication range, and $\mathcal{W}_i=\{Y_j, j \in \mathcal{N}_i\}$ is the set of data generated by  $\mathcal{N}_i$. The \emph{conditional entropy} is given by
\begin{equation}
D(Y_i|\mathcal{W}_i)=H(Y_i \cup \mathcal{W}_i)-H(\mathcal{W}_i), \label{cond}
\end{equation}
where, in general, the entropy of a multivariate Gaussian distribution of a set $\mathcal{K}$ of  $K$ random observations and of convariance matrix  $\boldsymbol{\Sigma}$ is given by
\begin{equation}
H(\mathcal{K} )=\frac{K}{2}+\frac{K}{2}\log(2\pi)+\frac{1}{2}\log|\boldsymbol{\Sigma}|.
\end{equation}
In this IoBT, an attacker is interested in compromising a subset of the IoBT sensors for the purpose of eavesdropping and acquiring information about the IoBT. We denote by $\mathcal{X}$ the set of compromised sensors.

The uncompromised IoBT sensors do not have full knowledge of the set of compromised sensors. Thus, each IoBT sensor $i$ forms a belief $p_i$ about the compromised sensors in its neighborhood, where $p_{i}$ is the  probability that a sensor $j \in \mathcal{N}_i$ is compromised.
Hence, in order to ensure that the data is delivered to the base station as securely as possible, each sensor $i \in \mathcal{M}\setminus \mathcal{X}$ decides whether to transmit or switch to a sleep mode based on the secrecy capacity. It is assumed that the link between sensor $i$ and its local sink becomes insecure if any of the neighboring eavesdroppers successfully decode the message. Thus, the secrecy capacity of the link between sensor $i$ and its local sink is given by the channel capacity of the link between sensor $i$ and its local sink minus maximum of capacities that are individual achievable among the neighboring eavesdroppers \cite{secmult}. Hence, the secrecy capacity is derived as follows. Let $T_i$ be the type of sensor $i$ where $T_i=c$ if the sensor is compromised and $T_i=u$ otherwise.   Let $<\sigma_1, \sigma_2, ..., \sigma_{N_i}>$ be the ordered sequence of neighbors of sensor $i$ in $\mathcal{N}_i$ based on their distance to sensor $i$ and let $\boldsymbol{I}_i=(I_{ij})_{j \in \mathcal{N}_i}$ be an indicator vector with each entry $I_{ij}$ indicating whether sensor $i$ believes that its neighbor $j$ is compromised or not i.e. $I_{ij}=I_i(T_j=c)$. The secrecy capacity of the link between sensor $i$ and its local sink based on sensor $i$'s belief is 
\begin{equation}
\tilde{C}_{a,i}(T_i,\boldsymbol{T}_{-i})=\Big[C_{a,i}-\sum_{j =1}^{N_i} \prod_{k=1}^{j-1} (1-I_{i\sigma_k})I_{i\sigma_j}C_{i,\sigma_j}\Big]^+ \label{secrecy}
\end{equation}
where $z^+:=\max(z,0)$, $\boldsymbol{T}_{-i}$ is the vector of types of the neighbors of $i$, $C_{a,i}$ is the channel capacity between sensor $i$ and its local sink, and $C_{i,j}$ is the channel capacity between sensor $i$ and its neighbor $j$.
In our system, the channel capacity between any two nodes $r$ and $s$ is given by:
\begin{equation}
C_{rs}=W\log\Big(1+\frac{A|d_{rs}|^{-\alpha}P}{\gamma^2}\Big),
\end{equation}
where $W$ is the transmission bandwidth. Given this model, each uncompromised sensor $i$ will choose whether to transmit or switch to sleep mode in a way to maximize a utility function that we will define next. In essence, when sensor $i$ is activated, the utility can be defined as the product of the conditional entropy given by (\ref{cond}) and the achieved secrecy capacity in (\ref{secrecy}) minus the energy spent $E_i$ in each time instant.
When sensor $i$ switches to sleep mode, the utility is given by the energy saved $E_i$ minus the product of the conditional entropy and the secrecy capacity, which represents the loss of information due to not transmitting. Thus, the utility when transmitting corresponds to the cost when switching to the sleep node.  Let $a_i$ be a binary variable which is equal to one if sensor $i$ chooses to transmit and zero otherwise. Let $\boldsymbol{a}_{-i}$ be the vector of decisions made by the neighboring nodes of node $i$. Further, in order for the attack not to get detected by the IoBT, the compromised sensors choose their decision according to the same utility function as the uncompromised sensors. Then, the utility of  each sensor $i$ can be defined as follows:
\begin{equation}
  \hspace{-0.1 cm} U_i(a_i,\boldsymbol{a}_{-i},T_i,\boldsymbol{T}_{-i})=
\begin{cases}
D(Y_i|\mathcal{W}_i(\boldsymbol{a}_{-i}))\times\tilde{C}_{s,i}(T_i,\boldsymbol{T}_{-i})-E_i, & \hspace{-0.3 cm} \text{if }  a_i=1,\\
 E_i- D(Y_i|\mathcal{W}_i(\boldsymbol{a}_{-i}))\times\tilde{C}_{s,i}(T_i,\boldsymbol{T}_{-i}),              &\hspace{-0.3 cm} \text{if} \hspace{0.1 cm}  a_i=0,
\end{cases} \label{util}
\end{equation}
According to (\ref{util}), the utility of sensor $i$ depends on the decisions made by its neighbors in its communication range and not on the decisions made by all the IoBT sensors.
Further, the utility of sensor $i$ depends on its belief $p_i$ about the probability of a neighbor being compromised. Thus, the problem is formulated as a graphical Bayesian game \cite{graphical} with incomplete information \cite{gametheory}, as explained next.

\section{IoBT Activation Graphical Bayesian Game}
\subsection{Game Formulation}
The problem of activating the IoBT sensors is formulated as a graphical Bayesian game \cite{graphical,gametheory} i.e. a game on a graph defined by 
$\Big[<G,<T_i, \mathcal{A}_i, p_i, U_i>_{i \in \mathcal{M}}>\Big]$
where $G=(\mathcal{M},\mathcal{E})$ is an undirected graph representing the IoBT where each vertex $i \in \mathcal{M}$ is a player of the game and corresponds to an IoBT sensor, $\mathcal{E}$ is the set of edges where edge $e_{ij}$ exists between players $i$ and $j$ if they can directly communicate i.e. $d_{ij}<r$, $T_i$ is the type of player $i$, $A_i$ is the action set of player $i$,  $p_i(.|T_i)$ is the conditional probability distribution of player $i$ about the types of the other players in its neighborhood $\mathcal{N}_i$ given that player $i$'s type is $T_i$ and $U_i$ is the payoff of player $i$.

In our game, any player can be of two types $T_i \in \{c,u\}$,  where $T_i=c$ corresponds to a compromised node and $T_i=u$ corresponds to uncompromised node.  The action set of each node is $A_i=\{0,1\}$. The utility of each player $i$ $U_i(a_i,\boldsymbol{a}_{-i},T_i,\boldsymbol{T}_{-i})$ with respect to the actions and types of players in its neighborhood is given by (\ref{util}).

In order to solve our game, the graphical Bayesian Nash equilibrium (GBNE) is adopted as follows.
\subsection{Graphical Bayesian Nash Equilibrium}
A strategy profile $\boldsymbol{a}^*$ constitutes a GBNE if no player $i$  has the incentive to change its strategy given its neighbors' strategies i.e.
\begin{equation}
a^*_i \in \arg \max \sum_{\boldsymbol{T}_{-i}}p_i(\boldsymbol{T}_{-i})u_i(a_i,a^*_{-i}, T_i,\boldsymbol{T}_{-i}) \hspace{0.1 cm} \forall i \in \mathcal{M}
\end{equation}
\vspace{-0.8 cm}

In our originial game, a pure strategy Nash equilibrium is not guaranteed to exist. Thus, in order to ensure the existence of a pure strategy Nash equilibrium, we transform the game into a potential game by considering  the repercussion utility as done for allocation games in \cite{allo} and coalition formation games in \cite{coal}. The repercussion utility for each player in our IoBT activation graphical game  is defined as the utility of the player plus the change in the utility of the players in its neighborhood caused by its presence i.e.

\vspace{-0.5 cm}
\small
\begin{equation}
  \hspace{-0.1 cm} q_i(a_i,a'_{-i}, T_i, \boldsymbol{T}'_{-i})=
\begin{cases}
L_i(a_i,a'_{-i}, T_i, \boldsymbol{T}'_{-i}), & \text{if }  a_i=1,\\
 U_i(a_i,a_{-i}, T_i,\boldsymbol{T}_{-i}),              & \text{if} \hspace{0.1 cm}  a_i=0,
\end{cases} \label{repercussion}
\end{equation}
\normalsize
where $L_i(a_i,a'_{-i}, T_i, \boldsymbol{T}'_{-i})=U_i(a_i,a_{-i}, T_i,\boldsymbol{T}_{-i})
+\sum_{j \in \mathcal{N}_i}U_j(a_j,\boldsymbol{a}_{-j}, T_j, \boldsymbol{T}_{-j})-U_j(a_j,\boldsymbol{\tilde{a}}_{-j}, T_j,\boldsymbol{T}_{-j})$, $\boldsymbol{\tilde{a}}_{-j}$ corresponds to the vector of strategies of the neighbors of $j$ with $a_i$ replaced by $\tilde{a}_i=1-a_i$. Hence, each player, using the repercussion utility, acts cooperatively and chooses a decision that does not degrade significantly the utlities of the players in its neighborhood, thus improving the system performance. Such a cooperative behavior is appropriate in an IoBT since it relies only on information exchange among nodes within the same communication range, unlike conventional cooperative schemes that require communication among all the nodes, and, therefore, it does not incur significant communication overhead. Besides, the repercussion utility defined in (\ref{repercussion}) assumes that a given player $i$ has information about the utilities of the neighbors of its neighbors. Thus, initially, IoBT sensors in the same communication range exchange information about their neighbors prior to computing their optimal strategies.
In the modified graphical game, the new set of players influencing player $i$'s utility include the neighbors of its neighbors i.e.
$\mathcal{N}'_i=\mathcal{N}_i  \cup_{j \in \mathcal{N}_i}\mathcal{N}_j$ and the vectors $\boldsymbol{a}'_{-i}$ and $\boldsymbol{T}'_{-i}$ are the strategy vector and type vector of players in $\mathcal{N}'_i$, respectively.

The following proposition shows that the our modified game with  the utility defined in (\ref{repercussion}), based on repercussion utilities, is a Bayesian potential game.

\begin{prop}
Our IoBT graphical game with utilities defined in (\ref{repercussion}) is a Bayesian potential game where the potential function V is the sum of the player original utililties  i.e.
$V(a_i,\boldsymbol{a}_{-i},\boldsymbol{T})=\sum_{i \in \mathcal{M} }U_i(a_i,\boldsymbol{a}_{-i},T_i,\boldsymbol{T}_{-i}),$
where $\boldsymbol{T}$ is the vector of types of all players.
\label{prop1}
\end{prop}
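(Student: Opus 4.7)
The plan is to verify the defining property of a (Bayesian) potential game: for every player $i$, every type profile, and every unilateral deviation from $a_i$ to $a'_i$, the change in the repercussion utility $q_i$ equals the change in the candidate potential $V=\sum_{k\in\mathcal M} U_k$, at least in conditional expectation over the neighbors' types. Because the claimed identity will turn out to hold pointwise in the type profile $\boldsymbol T$, taking the expectation with respect to $p_i(\boldsymbol T_{-i}\mid T_i)$ is immediate, so the Bayesian aspect will reduce to an ordinary-game computation.

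First I would fix a player $i$, a type profile $\boldsymbol T$, and an action profile $\boldsymbol a_{-i}$ of the other players, and compare $q_i(1,\boldsymbol a_{-i},\boldsymbol T)-q_i(0,\boldsymbol a_{-i},\boldsymbol T)$ with $V(1,\boldsymbol a_{-i},\boldsymbol T)-V(0,\boldsymbol a_{-i},\boldsymbol T)$. The key structural fact is the locality built into (\ref{util}): the utility $U_k$ depends only on the actions and types of $k$ and of $k$'s neighbors $\mathcal N_k$. Therefore, when only $a_i$ changes, the only terms in $V=\sum_k U_k$ that can move are $U_i$ itself and $U_j$ for $j\in\mathcal N_i$, giving
\begin{equation}
V(1,\boldsymbol a_{-i},\boldsymbol T)-V(0,\boldsymbol a_{-i},\boldsymbol T)=\bigl[U_i(1,\boldsymbol a_{-i})-U_i(0,\boldsymbol a_{-i})\bigr]+\sum_{j\in\mathcal N_i}\bigl[U_j\big|_{a_i=1}-U_j\big|_{a_i=0}\bigr],
\end{equation}
where I suppress the type arguments for readability.

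Next I would expand the left-hand side using the definitions in (\ref{repercussion}). By construction $q_i(0,\boldsymbol a_{-i},\boldsymbol T)=U_i(0,\boldsymbol a_{-i},\boldsymbol T)$, and $q_i(1,\boldsymbol a_{-i},\boldsymbol T)=L_i=U_i(1,\boldsymbol a_{-i})+\sum_{j\in\mathcal N_i}\bigl[U_j(a_j,\boldsymbol a_{-j})-U_j(a_j,\tilde{\boldsymbol a}_{-j})\bigr]$. Since $\tilde{\boldsymbol a}_{-j}$ differs from $\boldsymbol a_{-j}$ only by flipping $a_i$ from $1$ to $0$, the bracketed quantity is exactly $U_j\big|_{a_i=1}-U_j\big|_{a_i=0}$. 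Subtracting $q_i(0,\cdot)$ then gives precisely the same expression as the change in $V$ displayed above, so $q_i(1,\cdot)-q_i(0,\cdot)=V(1,\cdot)-V(0,\cdot)$. An identical argument, with the roles of $0$ and $1$ swapped, covers the deviation in the other direction (here one uses that when $a_i=0$ the repercussion term is absorbed into the definition $q_i(0,\cdot)=U_i$, so the neighbors' repercussion contribution appears entirely through the $a_i=1$ branch).

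Finally, because the pointwise identity holds for every realization of $\boldsymbol T_{-i}$, taking the conditional expectation with respect to $p_i(\boldsymbol T_{-i}\mid T_i)$ yields $\mathbb E_{p_i}[q_i(a'_i,\boldsymbol a_{-i},\boldsymbol T)-q_i(a_i,\boldsymbol a_{-i},\boldsymbol T)]=\mathbb E_{p_i}[V(a'_i,\boldsymbol a_{-i},\boldsymbol T)-V(a_i,\boldsymbol a_{-i},\boldsymbol T)]$, which is the definition of a Bayesian potential game with potential $V$. I expect the main delicate step to be bookkeeping in the expansion of $L_i$: one must be careful that in the sum over $j\in\mathcal N_i$ the quantity $a_j$ is held fixed and only $a_i$ is toggled inside $\boldsymbol a_{-j}$, which is exactly what the definition of $\tilde{\boldsymbol a}_{-j}$ guarantees; once this is pinned down, the rest is a straightforward cancellation.
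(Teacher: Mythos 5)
Your proposal is correct and follows essentially the same route as the paper's proof: both exploit the locality of the utilities in (\ref{util}) to reduce the change in $V=\sum_k U_k$ under a unilateral flip of $a_i$ to the terms $U_i$ and $U_j$, $j\in\mathcal{N}_i$, and then match this against the repercussion term in $L_i$ via the definition of $\tilde{\boldsymbol{a}}_{-j}$, concluding pointwise in $\boldsymbol{T}$ so that the Bayesian (expectation) step is immediate. Your explicit handling of the two branches of $q_i$ and of the reverse deviation is a slightly more careful write-up of the same cancellation the paper performs in (\ref{pot}).
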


\begin{proof}
Fix any player $i$ and let $\boldsymbol{T}$ is the vector of types of all players. For any $\boldsymbol{T}$ and $\boldsymbol{a}_{-i}$,  we have
\begin{eqnarray}
&&V(a_i,\boldsymbol{a}_{-i},\boldsymbol{T})-V(\tilde{a}_i,\boldsymbol{a}_{-i},\boldsymbol{T})=U_i(a_i,\boldsymbol{a}_{-i},T_i,\boldsymbol{T}_{-i})-U_i(\tilde{a}_i,\boldsymbol{a}_{-i},T_i,\boldsymbol{T}_{-i})\nonumber\\
&&\hspace{3 cm}+\sum_{j \in \mathcal{M},j\neq i}(U_j(a_j,\boldsymbol{a}_{-j},T_j,\boldsymbol{T}_{-j})-U_j(a_j,\boldsymbol{\tilde{a}}_{-j},T_j,\boldsymbol{T}_{-j}))\nonumber\\
&&=U_i(a_i,\boldsymbol{a}_{-i},T_i,\boldsymbol{T}_{-i})-U_i(\tilde{a}_i,\boldsymbol{a}_{-i},T_i,\boldsymbol{T}_{-i})\nonumber\\
&&\hspace{0.1 cm}+\sum_{j \in \mathcal{N}_i}(U_j(a_j,\boldsymbol{a}_{-j},T_j,\boldsymbol{T}_{-j})-U_j(a_j,\boldsymbol{\tilde{a}}_{-j},T_j,\boldsymbol{T}_{-j}))\nonumber\\
&&=q_i(a_i,\boldsymbol{a}'_{-i},T_i,\boldsymbol{T}'_{-i})-q_i(\tilde{a}_i,\boldsymbol{a}'_{-i},T_i,\boldsymbol{T}'_{-i}), \label{pot}
\vspace{-0.4 cm}
\end{eqnarray}
\normalsize
where the third equality follows from the fact that player $i$'s action affects only its neighbors and the last equality is obtained based on the definition of $q_i(a_i,\boldsymbol{a}_{-i},T_i,\boldsymbol{T}'_{-i})$ in (\ref{repercussion}).

Thus, according to (\ref{pot}),  our IoBT graphical game with utilities defined in (\ref{repercussion}) is a  Bayesian potential game with the potential function equal to the sum of original utilities of the players. Therefore, it admits a pure strategy Nash equilibrium \cite{pot}. 
\end{proof}
In order to find a pure strategy Nash equilibrium, we present a learning algorithm, defined by Algorithm 1, based on the best response dynamics and tailored to the characteristics of our IoBT graphical Bayesian game.
The proposed learning scheme allows the IoBT sensors to find their Nash equilibrium strategy in a distributed fashion while taking into account the limited communication range of the IoBT sensors. In particular, in each iteration of the algorithm, each IoBT sensor $i$ computes its optimal strategy given the current strategy profile $\boldsymbol{a}'_{-i}$.
Then, sensor $i$ broadcasts its optimal strategy to its neighbors which consequently broadcast sensor $i$ strategy to their neighbors. The process is repeated until convergence.

The proposed learning scheme is guaranteed to converge to a pure strategy Nash equilibrium since the modified graphical Bayesian game is a Bayesian potential game according to Proposition \ref{prop1}.

The performance of our IoBT graphical game, at equilibrium, is assessed next.
\begin{algorithm}[t]
	
	\scriptsize Each sensor $i \in \mathcal{M}$ initializes its strategy vector $\boldsymbol{a}'_{-i}$.
	
	\Repeat{convergence}{
		
		\ForEach{sensor $i \in 1, ...,M$}{
			Sensor $i$ computes its best response. 
			
			Sensor $i$ broadcasts its newly computed best response to its neighbors in $\mathcal{N}_i$.

                       Each sensor $j$ in $\mathcal{N}_i$ broadcasts the best response of sensor $i$ to its neighbors in $\mathcal{N}_j$
			
			All devices in $\mathcal{N}'_i$ update their strategy vectors. }
		
		
	}
	\caption{Nash Equilibrium Learning Algorithm for the IoBT graphical Bayesian Game}\label{alg1}
\end{algorithm}

\section{Simulation Results}
\vspace{-0.3 cm}
For our results, we consider a rectangular area of size $100$ m $\times$ $100$ m in which the sensors are randomly deployed according to a uniform distribution. The values of the parameters considered are:
$W=20$ MHz, $\sigma^2=-90.2$ dBm, $\beta=1$, $\lambda=1$, $P=0.1$ W, $r=2$ m, and $T=1$ ms. The probability that a given sensor is compromised is assumed to be the same for all sensors i.e. $p_i=p_e \hspace{0.1 cm} \forall i \in \mathcal{M}$ where $p_e$ is a constant. Two scenarios are considered:
\begin{itemize}
\vspace{-0.2 cm}
\item The probability that a sensor is compromised is varied between $0$ to $1$ in steps of $0.1$. Two values of the total number of sensors are considered: $M=1000$ and $M=3000$.
\vspace{-0.2 cm}
\item The total number of sensors is varied between $500$ to $5000$ in steps of $500$. Two values of the probability that a sensor is compromised are considered: $p_e=0.1,0.5$.
\end{itemize}
\vspace{-0.2 cm}
 For comparison, a baseline in which all the sensors are activated is considered, and for the considered scenarios, 1000 independent simulation runs are performed. Then,  at equilibrium, the average number of activated sensors, the average entropy of the activated sensors, and the average percentage decrease in the energy consumed using the equilibrium solution compared to the baseline are computed. Algorithm 1 converges in at most $5$ iterations.
\begin{figure}
\centering
\begin{minipage}[t]{ 0.45\textwidth}
 \hspace{-0.4 cm} \includegraphics[width=1.1\linewidth]{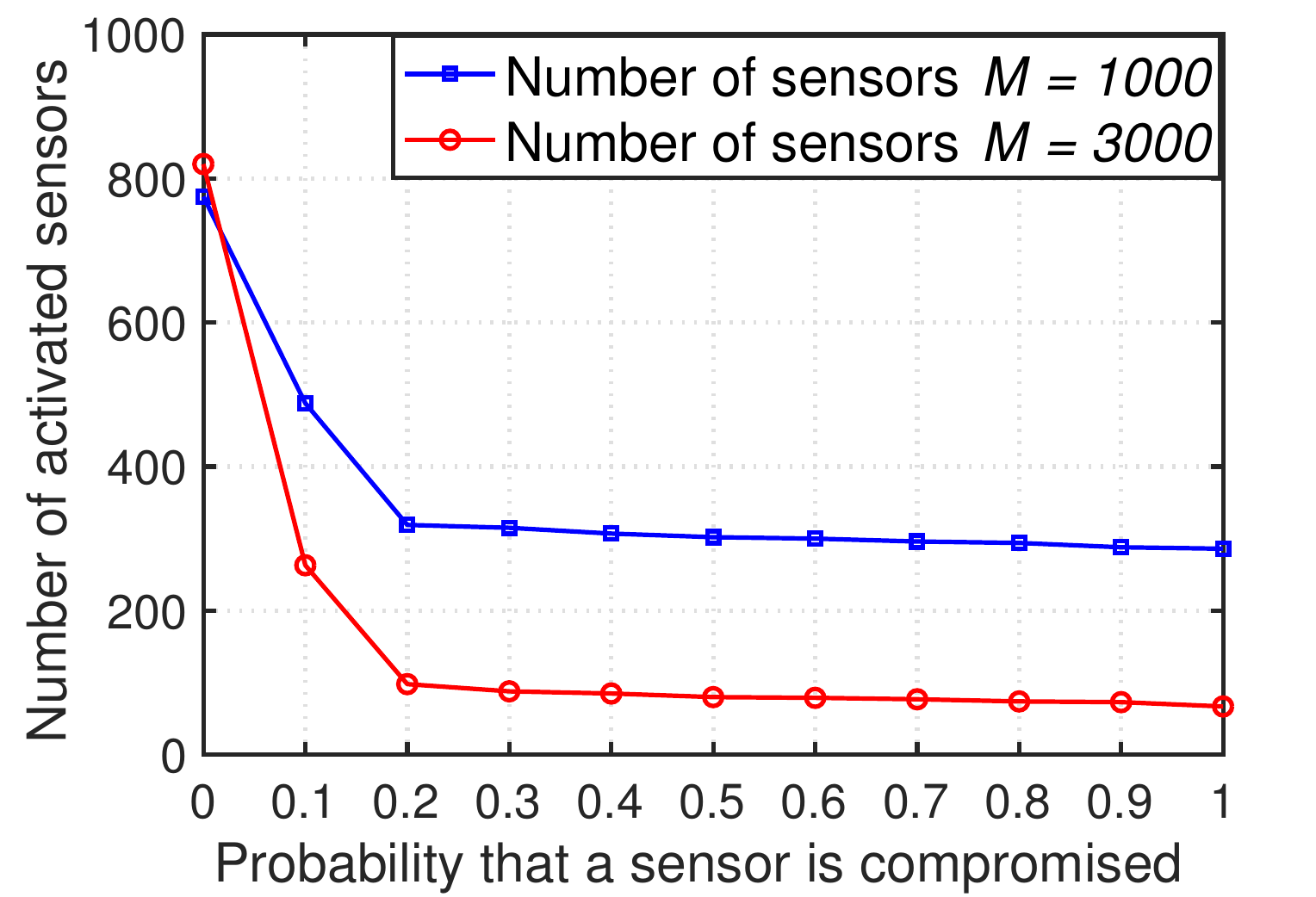}
  \caption{Number of activated sensors versus the probability that a sensor is compromised for  different values of the number of sensors.}
  \label{actprob}
\end{minipage}%
\qquad 
\begin{minipage}[t]{0.45 \textwidth}
  \centering
  \includegraphics[width=1.1\linewidth]{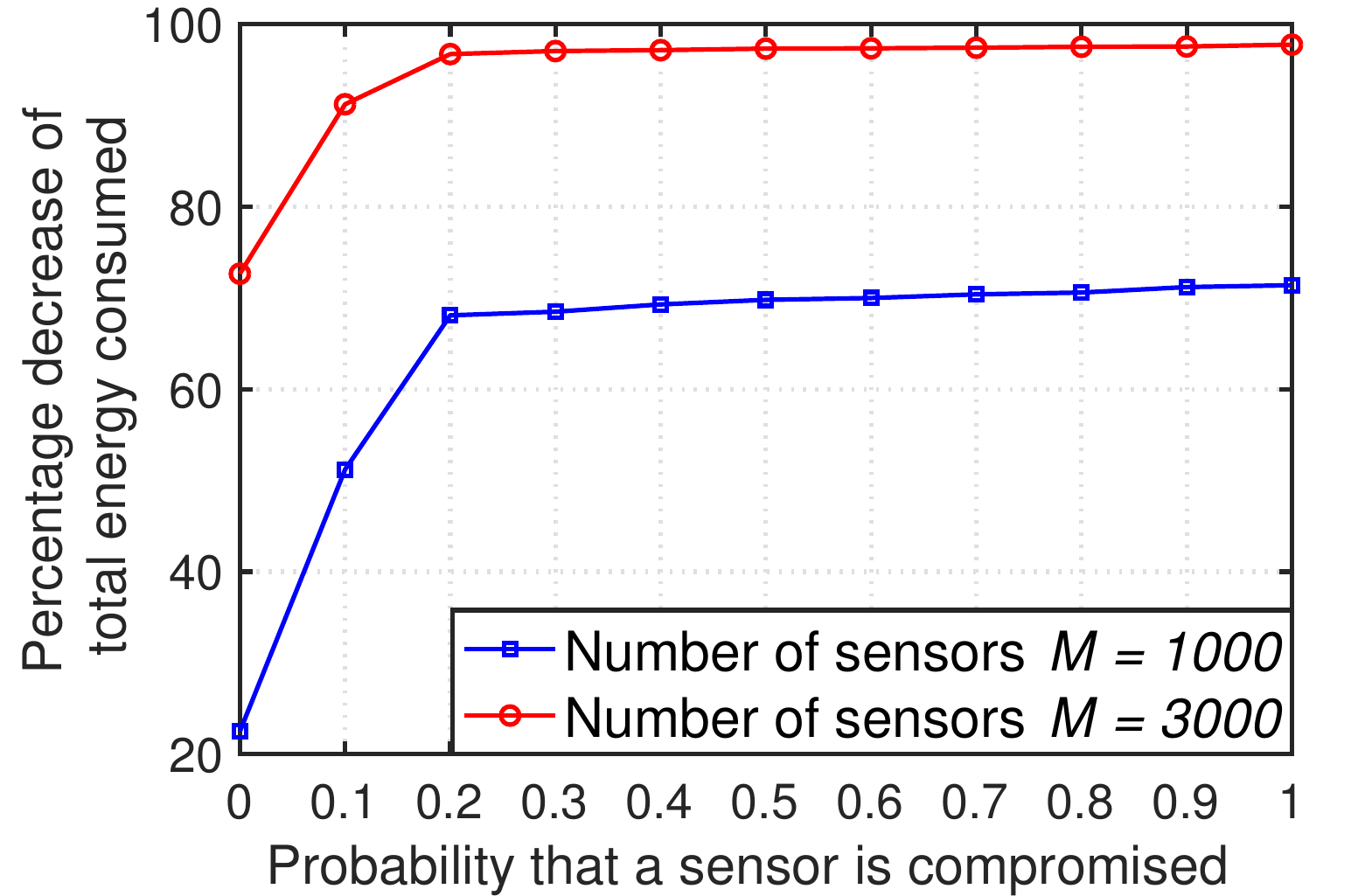}
  \caption{Percentage decrease of energy consumed versus the probability that a sensor is compromised for different numbers of sensors.}
  \label{perdecrprob}
\end{minipage}
\end{figure}
%

\begin{figure}[t]{
	\centering
	\includegraphics[width=6.5 cm,height=4 cm,angle=0]{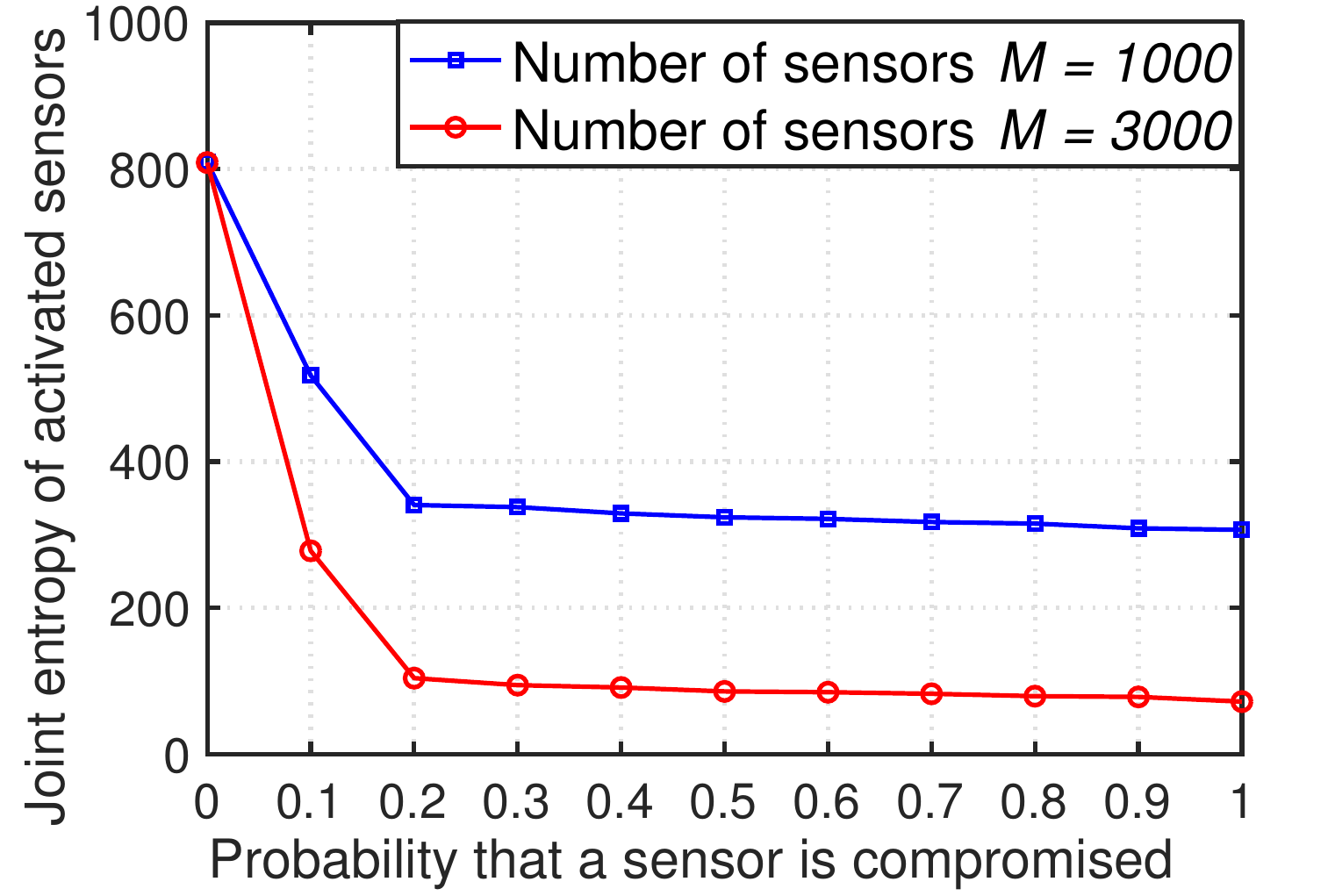}
	\caption{Joint entropy of activated sensors versus the probability that a sensor is compromised for  different numbers of sensors.}\label{entract}
	}\vspace{-0.5 cm}
\end{figure}
Fig. \ref{actprob} shows the average number of activated sensors resulting from the equilibrium solution versus the probability that a sensor is compromised $p_e$ when the total number of sensors $M$ is $1000$ and $3000$, respectively. For both considered values of $M$, the number of activated sensors decreases as the probability that a sensor is compromised increases. This is because the increase in the probability of an eavedropper decreases the secrecy capacity which consequently results in decreasing the utility of transmitting. 
Also, Fig. \ref{actprob} shows that when $M$ increases from $1000$ to $3000$, the number of activated sensors decreases for all considered values of $p_e$. This is due to the fact that, as the number of sensors increases, the proportion of compromised sensors increases which causes the secrecy capacity to decrease, and hence, fewer sensors are activated.

Fig. \ref{perdecrprob} shows the percentage decrease of energy consumed using the equilibrium solution compared to the baseline versus the probability that a sensor is compromised when the total number of sensors $M$ is $1000$ and $3000$, respectively. As shown in Fig. \ref{perdecrprob}, the percentage decrease in the energy consumed increases significantly as the probability that a sensor is compromised increases. This is due to the significant decrease in the number of activated sensors as shown in Fig. \ref{actprob}. 
The percentage decrease in the energy consumed reaches up to $97\%$ when $M=3000$ and $p_e=1$.

Fig. \ref{entract} shows the average joint entropy of the activated sensors using the equilibrium solution versus the probability that a sensor is compromised when the total number of sensors $M$ is $1000$ and $3000$, respectively. Fo all considered values of $p_e$, the joint entropy is positive, yet  it decreases with the probability that a sensor is compromised, due to the decrease in the number of activated sensors as shown in Fig. \ref{actprob}.
As for the baseline, the joint entropy is $- \infty$. This is because the considered number of sensors corresponds to a dense deployment of the IoBT and all the sensors are activated in the baseline. Thus, the sensor's data given the remaining sensor's measurements becomes deterministic. Thus, Fig. \ref{entract} demonstrates that the proposed approach is effective in reducing the redundancy in the transmitted information. Fig. \ref{entract} also shows the tradeoff between the secrecy level and the information transmitted by the IoBT sensors.
\begin{figure}
\centering
\begin{minipage}[t]{ 0.45\textwidth}
 \hspace{-0.4 cm} \includegraphics[width=1.1\linewidth]{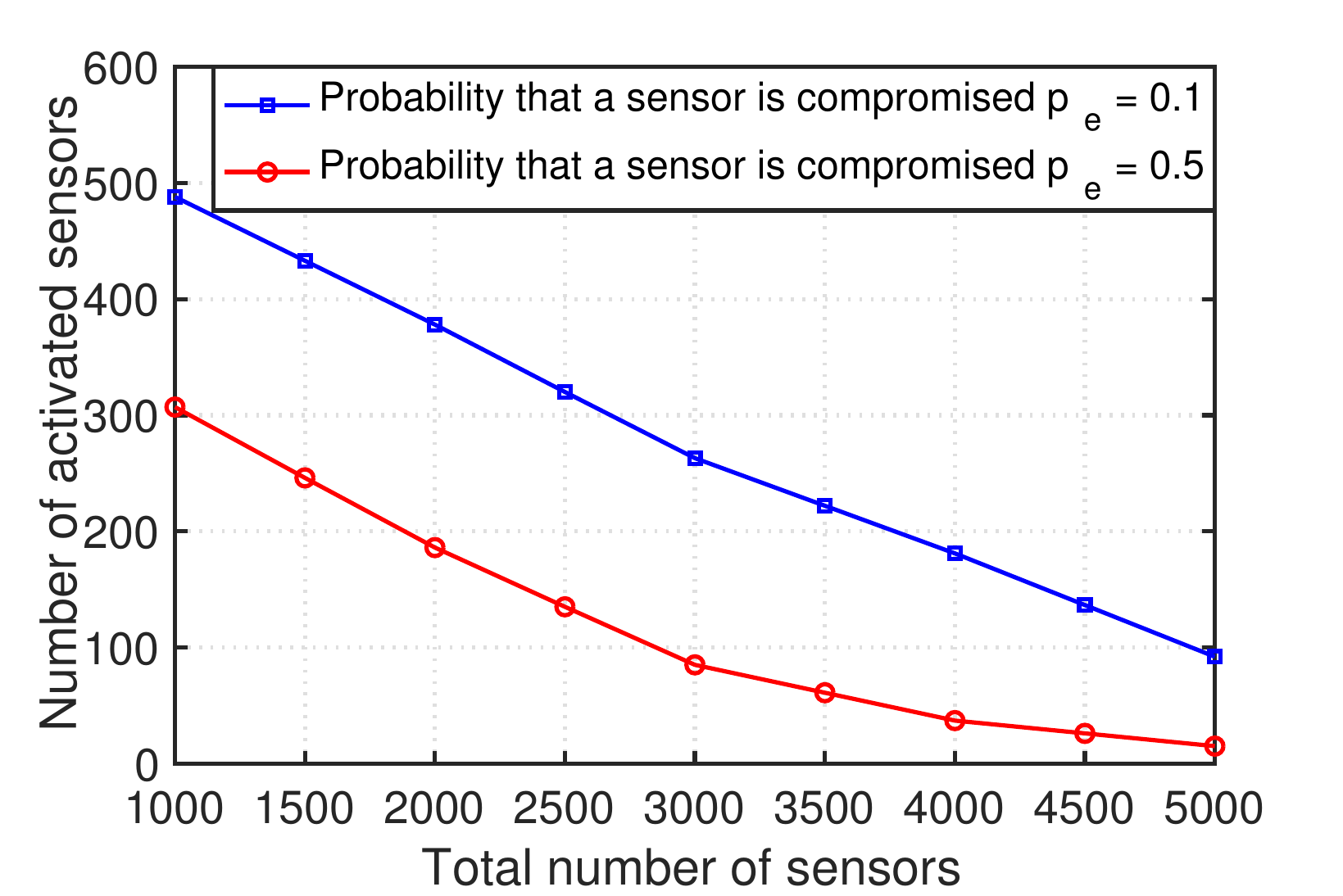}
  \caption{Number of activated sensors versus the total number of sensors for different values of the probability that a sensor is compromised.}
  \label{actnum}
\end{minipage}%
\qquad 
\begin{minipage}[t]{0.45 \textwidth}
  \centering
  \includegraphics[width=1.1\linewidth]{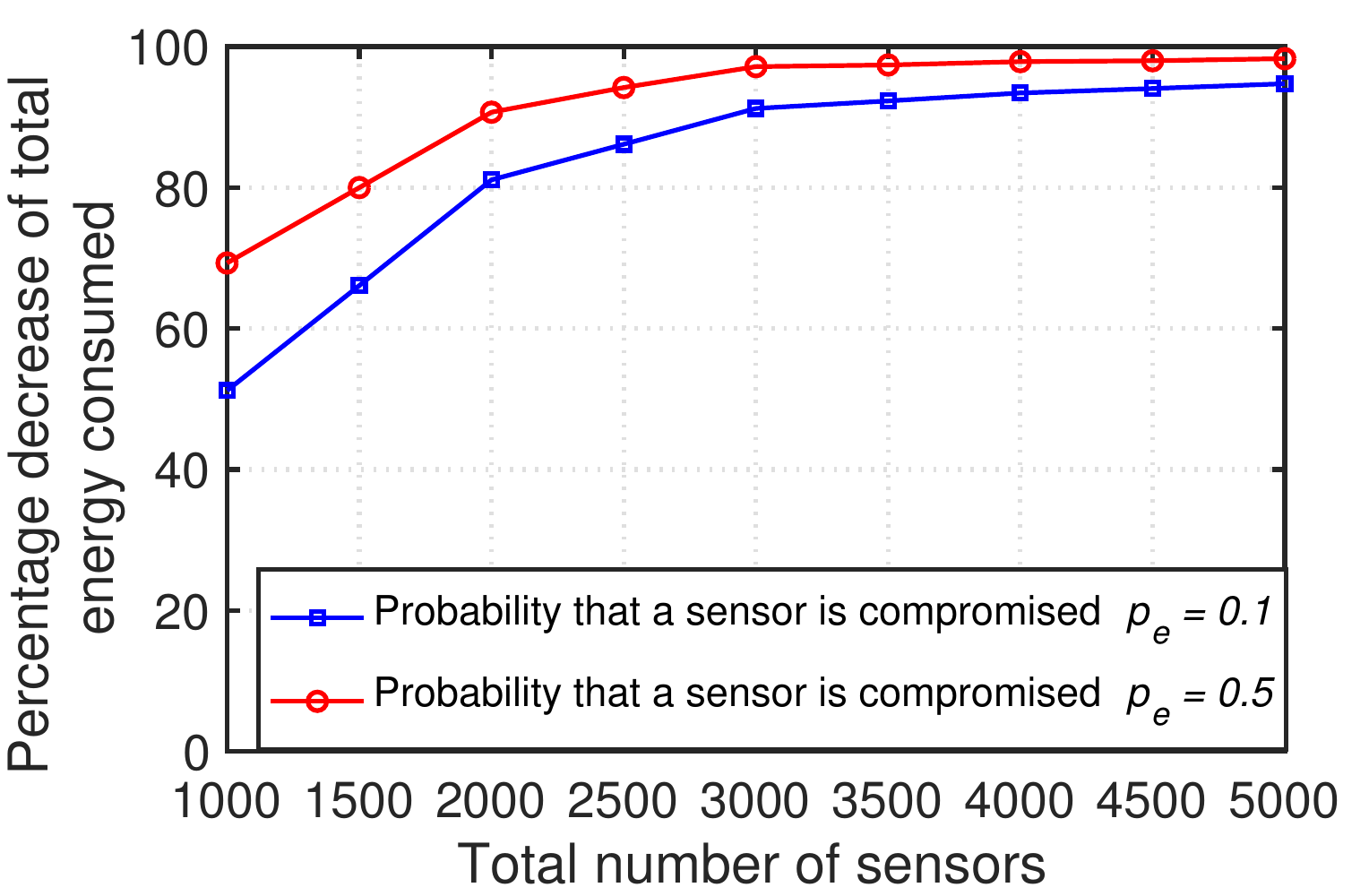}
  \caption{Percentage decrease of energy consumed versus the total number of sensors for different values of the probability that a sensor is compromised.}
  \label{perdecrnum}
\end{minipage}
\end{figure}
\begin{figure}
\centering
\begin{minipage}[t]{ 0.45\textwidth}
 \hspace{-0.4 cm} \includegraphics[width=1.1\linewidth]{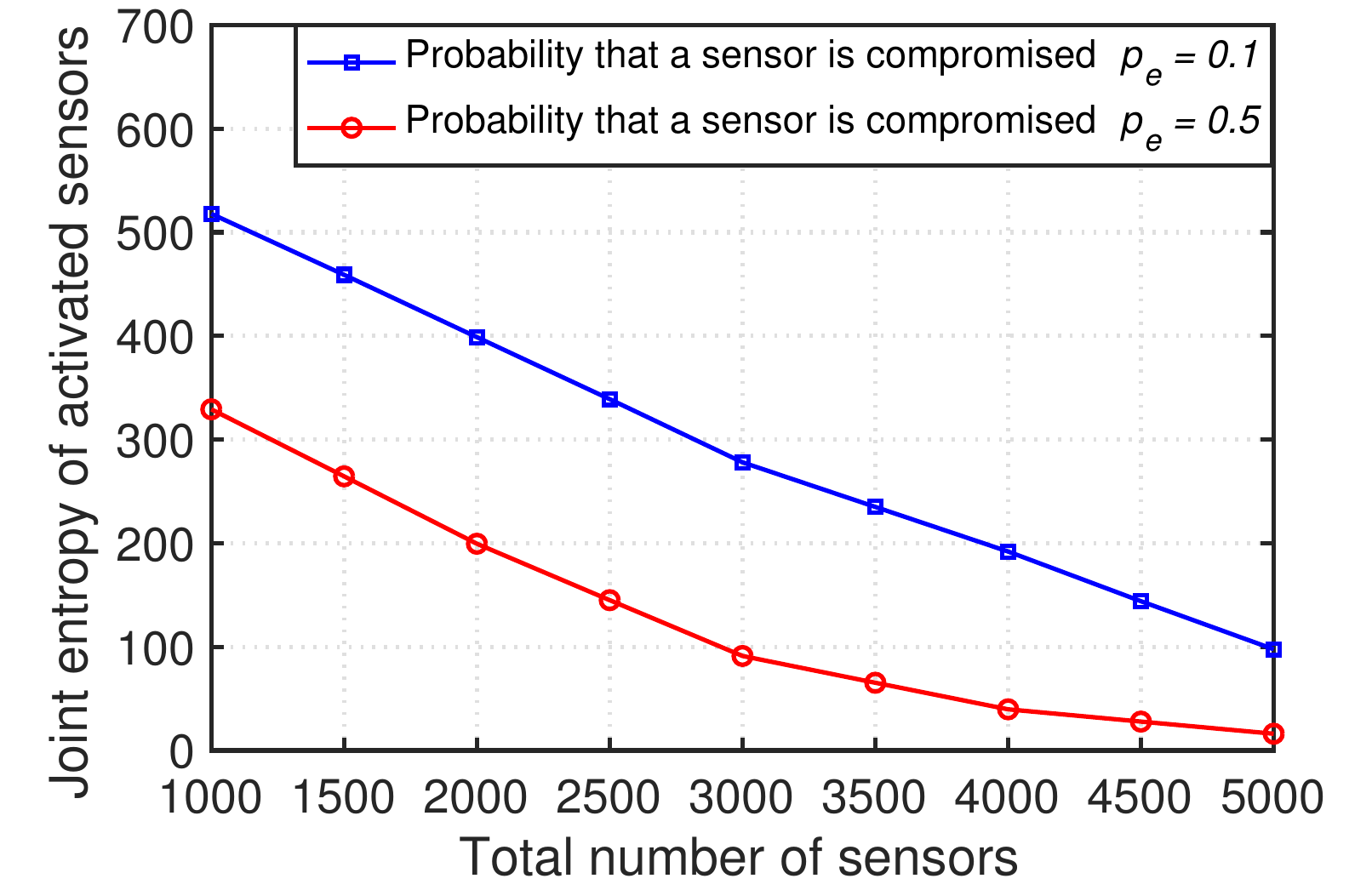}
  \caption{Joint entropy of activated sensors versus total number of sensors for different values of the probability that a sensor is compromised}
  \label{entrnum}
\end{minipage}%
\qquad 
\begin{minipage}[t]{0.45 \textwidth}
  \centering
  \includegraphics[width=1.05\linewidth]{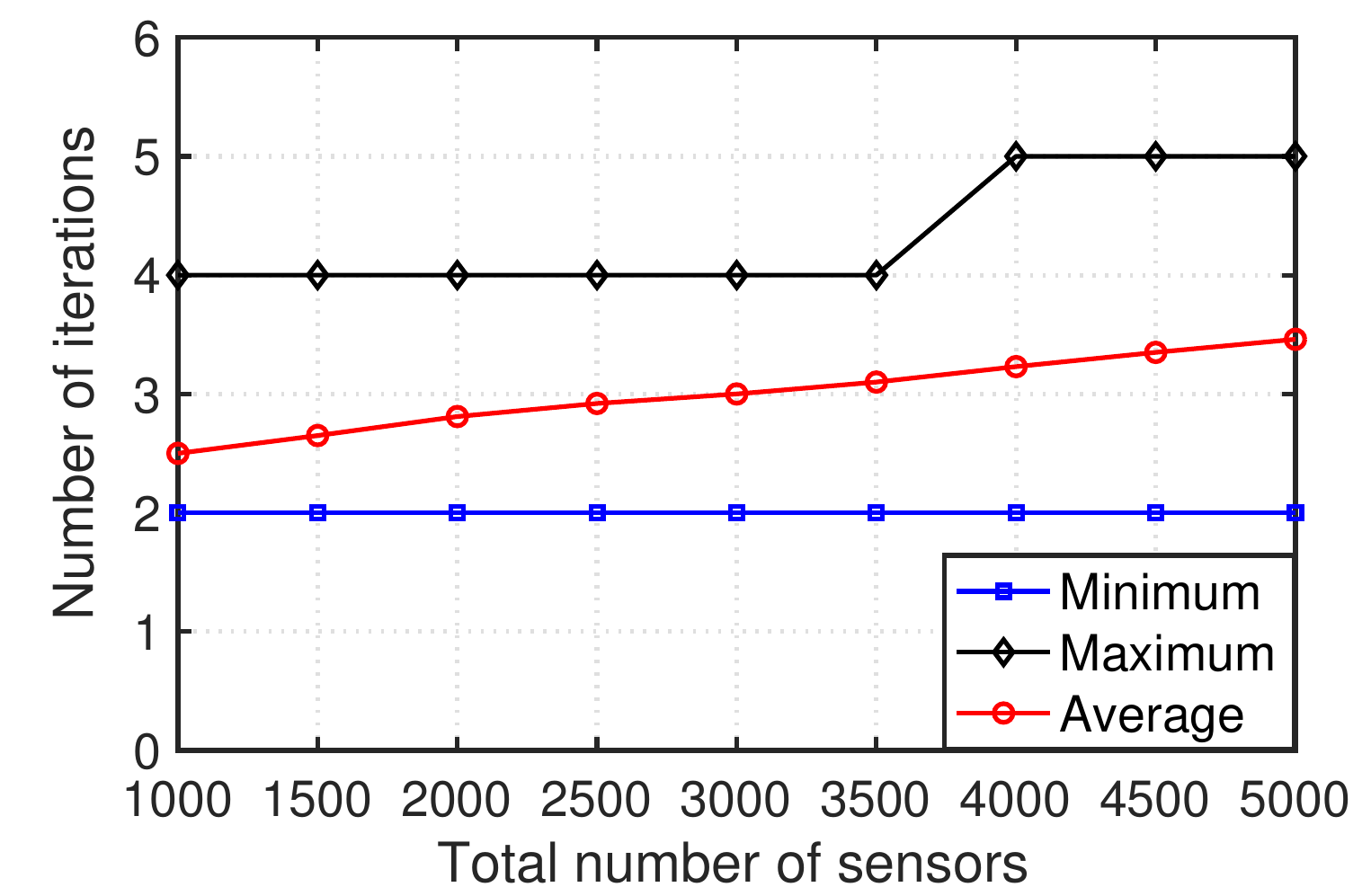}
  \caption{Number of iterations until convergence versus total number of sensors.}
  \label{iobtiter}
\end{minipage}
\end{figure}


%
\vspace{-0.5 cm}
Fig. \ref{actnum} shows the average number of activated sensors at the GBNE as a function of the total number of sensors when $p_e$ is $0.1$ and $0.5$, respectively. Fig. \ref{actprob} shows that, for both considered values of the probability $p_e$, the number of activated sensors decreases as the total number of devices increases. This is due to the fact that, as the density of sensors increases, the secrecy capacity decreases, which consequently increases the utility of each sensor and causes fewer number of sensors to be activated.

Fig. \ref{perdecrnum} shows the percentage decrease in the  total energy consumed using the equilibrium solution compared to the baseline versus the total number of sensors $M$ when $p_e$ is $0.1$ and $0.5$, respectively. Fig. \ref{perdecrnum} shows that the percentage decrease in the energy consumed increases with $M$ due to the decrease in the number of activated sensors, as shown in Fig.  \ref{actnum}. Fig. \ref{perdecrnum} confirms the significant decrease in the total energy consumed using the equilibrium solution compared to the baseline. The decrease in the total energy consumed reaches up to $98\%$ when $M=5000$ and $p_e=1$.

Fig.  \ref{entrnum} shows the  average joint entropy of the activated sensors at the GBNE as a function of the total number of sensors $M$ when the probability that a sensor is compromised $p_e$ is $0.1$ and $0.5$, respectively. As shown in Fig.  \ref{entrnum} for both considered values of $p_e$, the joint entropy decreases as the number of devices increases, due to the decrease in the number of activated sensors according to Fig. \ref{actnum}. As for the baseline, the value of joint entropy remains $- \infty$ as the density of the devices increases. Thus, Fig. \ref{entrnum} shows that, using the equilibrium solution, the redundancy of the transmitted information is reduced significantly compared to the baseline, as $M$ increases.

Fig. \ref{iobtiter} shows the maximum, minimum, and average number of iterations spent until Algorithm 1's convergence as a function of the total number of sensors.  The value of the probability that a sensor is compromised $p_e$ is $0.1$. According to Fig. \ref{iobtiter}, the maximum number of iterations is $4$ when the total number of  sensors is less than or equal to $3500$. When the number of sensors becomes greater than or equal to $4000$, the maximum number of iterations increases by only one iterations and becomes $5$.
As for the average number of sensors, its value is $2.5$ when the number of sensors is $1000$. Then, the average number of iterations increases slightly with the number of sensors until it reaches $3.55$ when the number of sensors is $5000$.
Fig. \ref{iobtiter} also shows that the minimum number of iterations is $2$ for all considered number of sensors. Thus, Fig. \ref{iobtiter} demonstrates the fast convergence of Algorithm 1 for large number of devices.
\vspace{-0.5 cm}
\section{Conclusion}
\vspace{-0.3 cm}
In this paper, we have considered the problem of secure sensor activation in the Internet of Battlefield Things in which each sensor decides whether to transmit or not in order to maximize its payoff which is a function of the secrecy rate as well as the redundancy of the transmitted information.
We have formulated the problem as a graphical Bayesian game and have shown that the proposed game is Bayesian potential game.  We have proposed a learning algorithm that is suitable for our IoBT graphical game and that is guaranteed to converge to a Nash equilibrium.
Our results have shown the tradeoff between the information transmitted by the IoBT sensors and the desired secrecy level. Our results have also shown the effectiveness of the proposed approach in reducing the energy consumed compared to the baseline in which all the IoBT sensors are activated. The reduction in energy consumption reaches up to $98\%$ compared to the baseline, when the number of sensors is $5000$. For future work, we will extend this work to include other attack types, such as those in which incorrect information is transmitted.
\vspace{-0.2 cm}
\section*{References}

\begin{thebibliography}{0.6}
\bibitem{IoBTone}M. Tortonesi,  A. Morelli, M. Govoni, J. Michaelis, N. Suri, C. Stefanelli, and S. Russell, ``Leveraging Internet of Things within the military network environment — Challenges and solutions,'' in \emph{Proc. of IEEE World Forum on Internet of Things (WF-IoT)}, Reston, VA, USA, Dec. 2016, pp.~ 111-116.
\bibitem{drone}M. Mozaffari, W. Saad, M. Bennis, and M. Debbah, ``Unmanned aerial vehicle with underlaid device-to-device communications: performance and tradeoffs,'' \emph{IEEE Transactions on Wireless Communications}, vol.\ 15, no.\ 6, pp.~ 3949 - 3963, Jun. 2016.
\bibitem{kraus2} E. Schulz, M. Speekenbrink, and A. Krause, ``A tutorial on Gaussian process regression: Modelling, exploring, and exploiting functions,'' in bioRxiv, https://www.biorxiv.org/content/early/2017/10/10/095190.
\bibitem{security}K. Zhao and L. Ge, ``A survey on the Internet of Things security," in \emph{Proc. of  International Conference on Computational Intelligence and Security}, Leshan, China, Dec. 2013, pp. 663-667.
\bibitem{actrout}J. Zhang, L. Xu and H. Yang, ``A novel sleep scheduling algorithm for wireless sensor networks," in \emph{Proc. of International Conference on Intelligent Information Hiding and Multimedia Signal Processing (IIH-MSP)}, Adelaide, SA, Australia, Sep. 2015, pp. 364-367.
\bibitem{actrout1}I. Alafeef, F. Awad and N. Al-Madi, ``Energy-aware geographic routing protocol with sleep scheduling for wireless multimedia sensor networks,'' in \emph{Proc. of International Conference on Smart Cities: Improving Quality of Life Using ICT $\&$ IoT (HONET-ICT)}, Irbid, Jordan, Oct. 2017, pp. 93-97.
\bibitem{actrout3} T. Zheng, S. Radhakrishnan and V. Sarangan, ``A routing layer sleep scheme for data gathering in wireless sensor networks,'' in \emph{Proc. of IEEE International Conference on Communications (ICC)}, Ottawa, ON, Canada, Jun. 2012, pp. 735-739.
\bibitem{actrout4}T. Canli, F. Nait-Abdesselam and A. Khokhar, ``A cross-layer optimization approach for efficient data gathering in wireless sensor networks,'' in \emph{Proc. of IEEE International Networking and Communications Conference}, Lahore, Pakistan, May. 2008, pp. 101-106.
\bibitem{actener}P. Sharma, A. Khan, A. Narasimhan, S. Ramalingam and S. K. Tripathi, ``Energy conservation in sensor networks through selective node activation,'' in \emph{Proc. of International Symposium on a World of Wireless, Mobile and Multimedia Networks}, Buffalo, NY, USA, Jun. 2006, pp.~ 115-124.
\bibitem{actentr}R. Yu, X. Wang and S. K. Das, ``An information entropy approach for sleep scheduling in densely-deployed sensor networks,'' in  \emph{Proc. of IEEE International Conference on Mobile Ad-hoc and Sensor Networks}, Dalian, China, Dec. 2013, pp.~181-186.
\bibitem{actcom}W. Chen and I. J. Wassell, ``Compressive sleeping wireless sensor networks with active node selection,'' in \emph{Proc. of IEEE Global Communications Conference}, Austin, TX, USA, Dec. 2014, pp.~ 7-12.
\bibitem{actfish}M. Patan and D. Ucíński, ``A sparsity-enforcing method for optimal node activation in parameter estimation of spatiotemporal processes,'' in \emph{Proc. IEEE Annual Conference on Decision and Control}, Melbourne, Australia, Dec. 2017, pp.~ 3089-3094.
\bibitem{actcov}C. Danratchadakorn and C. Pornavalai, ``Coverage maximization with sleep scheduling for wireless sensor networks,'' in \emph{Proc. International Conference on Electrical Engineering/Electronics, Computer, Telecommunications and Information Technology}, Hua Hin, Thailand, Jun. 2015, pp.~ 1-6.
\bibitem{actcov1} M. Chenait, B. Zebbane, C. Benzaid and N. Badache, ``Sleep scheduling with predictive coverage redundancy check in wireless sensor networks,'' in  \emph{Proc. of International Symposium on Wireless Communication Systems (ISWCS)}, Brussels, Belgium, Aug. 2015, pp. 366-370.
\bibitem{actmal} S. Shen, H. Li, R. Han, A. V. Vasilakos, Y. Wang, and Q. Cao, ``Differential game-based strategies for preventing malware
propagation in wireless sensor networks,'' in \emph{IEEE Transactions on Information Forensics and Security}, vol.\ 9, no.\ 11, pp.~ 1962-1973, Nov. 2014.
\bibitem{actsec1} G. Jiang,  S. Shen, K. Hu, L. Huang,  H. Li,  and R. Han. ``Evolutionary game-based secrecy rate adaptation in wireless sensor networks'' in \emph{Proc.  of International Journal of Distributed Sensor Networks}, Mar. 2015. pp.~1-13. 
\bibitem{actsec} Y. Kwon, X. Wang and T. Hwang, ``A game With randomly distributed Eavesdroppers in wireless ad hoc networks: A secrecy EE perspective,'' in \emph{IEEE Transactions on Vehicular Technology}, vol.\ 66, no.\ 11, pp.~ 9916-9930, Nov. 2017.
\bibitem{secmult}Y. Zou, J. Zhu, G. Wang and H. Shao, ``Secrecy outage probability analysis of multi-user multi-eavesdropper wireless systems,'' in \emph{Proc. of IEEE/CIC International Conference on Communications in China (ICCC)}, Shanghai, China, Oct. 2014, pp. 309-313.

\bibitem{kraus}A. Krause, A. Singh, and C. Guestrin, ``Near-optimal sensor placements in Gaussian processes: Theory, efficient algorithms and empirical studies,'' in \emph{Journal of Machine Learning Research}, vol.\ 9, pp.~235-284, Feb. 2009.
\bibitem{graphical}M. Kearns, ``Graphical games'' in:  Algorithmic
Game Theory, Cambridge: Cambridge University Press, 2007, pp. 159-180. 
\bibitem{gametheory} Z. Han, D. Niyato, W. Saad, T. Ba\c{s}ar, and A. Hj{{\o{}}}rungnes, \emph{Game Theory in Wireless and Communication Networks: Theory, Models, and Applications}, Cambridge University Press, 2012.
\bibitem{pot}T. Ui, ``Bayesian potentials and information structures:
Team decision problems revisited'', \emph{International Journal of Economic Theory}, vol.\ 5 no.\ 3, pp.~ 271-291, Sep. 2009.
\bibitem{allo}P. Coucheney, C. Touati and B. Gaujal, ``Selection of efficient pure strategies in allocation games,'' in \emph{Proc. of International Conference on Game Theory for Networks}, Istanbul, Turkey, May 2009, pp. 658-666.
\bibitem{coal} N. Abuzainab, S. R. Vinnakota and C. Touati, ``Coalition formation game for cooperative cognitive radio using Gibbs Sampling,'' in \emph{Proc. of IEEE Wireless Communications and Networking Conference (WCNC)}, New Orleans, LA, USA, Mar. 2015, pp. 937-942.
\end{thebibliography}
\smaller{
}
\end{document}